\def\RS#1{\marginpar{\tiny [RS]: #1}}
\newtheorem{proposition}{Proposition}
\newtheorem{corollary}{Corollary}
\theoremstyle{definition}
\newcommand{\mycomment}[1]{}
\begin{document}

%\pagenumbering{gobble}

\title{Testing the free-rider hypothesis in climate policy\thanks{Acknowledgements: M.D.'s research is conducted as part of the Cluster of Excellence `CLICCS - Climate, Climatic Change, and Society', contribution to the CEN of the University of Hamburg. F.N.'s research has been supported by NATCOOP (European Research Council 678049).}
 \\
\bigskip
}

\author{
{\sc Robert C. Schmidt\thanks{Institute for Microeconomics, University of Hagen, Universit{\"a}tsstr.\ 11, 58097 Hagen, Germany; E-mail:
\texttt{robert.schmidt@fernuni-hagen.de}}}
\and
{\sc Moritz Drupp\thanks{Department of Economics, University of Hamburg, Von-Melle-Park 5, 20146 Hamburg, Germany; E-mail:
\texttt{Moritz.Drupp@uni-hamburg.de}}}
\and
{\sc Frikk Nesje\thanks{Department of Economics, University of Copenhagen, {\O}ster Farimagsgade 5, Building 26, 1353 Copenhagen K, Denmark; E-mail:
\texttt{frikk.nesje@econ.ku.dk}}}
\and
{\sc Hendrik Hoegen\thanks{Institute for Microeconomics, University of Hagen, Universit{\"a}tsstr.\ 11, 58097 Hagen, Germany; E-mail:
\texttt{hendrik.hoegen@fernuni-hagen.de}}}
}

\date{\today \\
%\vspace{1em}
%-- preliminary and incomplete --
}

\maketitle

\begin{abstract}
Free-riding is widely perceived as a key obstacle for effective climate policy. In the game-theoretic literature on non-cooperative climate policy and on climate cooperation, the free-rider hypothesis is ubiquitous. Yet, the free-rider hypothesis has not been tested empirically in the climate policy context. With the help of a theoretical model, we demonstrate that if free-riding were the main driver of lax climate policies around the globe, then there should be a pronounced country-size effect: Countries with a larger share of the world's population should, all else equal, internalize more climate damages and thus set higher carbon prices. We use this theoretical prediction for testing the free-rider hypothesis empirically. Drawing on data on emission-weighted carbon prices from 2020, while controlling for a host of other potential explanatory variables of carbon pricing, we find that the free-rider hypothesis cannot be supported empirically, based on the criterion that we propose. Hence, other issues may be more important for explaining climate policy stringency or the lack thereof in many countries.
\newline\newline
\textit{JEL Codes}: F53, H41, H87, Q54, Q58\\
\textit{Keywords}: Carbon pricing, carbon tax, climate policy, climate cooperation, free-riding.
\end{abstract}

\newpage

\clearpage
%\pagenumbering{arabic}
%\setcounter{page}{1}

\section{Introduction}

In the economic literature on climate policy, the so-called free-rider effect is ubiquitous.\footnote{For an overview, see e.g.\  Barrett (2005), Finus (2008), Kolstad and Toman (2005).} Underlying this effect is a simple reasoning: if a country unilaterally raises its efforts to contribute to a global public good (such as climate stabilization), then all countries benefit (e.g., via reduced climate damages). Yet, the country bears the costs individually. Hence, unless the country is large enough so that a substantial fraction of the global benefits accrue inside its borders, then in the absence of a cooperative agreement, the country has little incentives to contribute. Instead, it would prefer to free-ride on any efforts of the other countries. The free-rider problem has also been emphasized in the context of climate cooperation. An influential strand of literature highlights that, due to the free-rider incentive, the ``stable'' coalition size is often small, and especially so when the potential gains from cooperation are large.\footnote{See Barrett (1994), Carraro and Siniscalco (1993), Dixit and Olson (2000), Hoel (1992), Karp and Simon (2013), among others.} Abstaining from a climate coalition that would otherwise induce the country to contribute more is just another way of free-riding on the efforts of others.

Lax climate policies in many countries have often been seen as indicative of the free-rider effect (Nordhaus 2015), and countries' efforts to cooperate on climate policy have indeed proven elusive for many years, with the Copenhagen Climate Summit in 2009 being widely perceived as a failure as just one example. Although these observations seem broadly in line with the free-rider hypothesis, there may be other, equally plausible explanations for the observation that, in practice, climate policy often falls short of what scholars from various disciplines would consider an ``adequate'' response to the climate problem (see, e.g., Drupp et al. 2022; H\"ansel et al. 2020; Nordhaus 2019; Pindyck 2019; Rennert et al. 2022).

In this paper, we ask the following question: Is it possible to identify a ``footprint'' of the free-rider effect that would allow to identify this effect empirically, thereby distinguishing it from other possible explanations for why climate policy may fail? In other words, are lax climate policies in countries around the globe already sufficient ``proof'' that countries are free-riding, or are other explanations for lax climate policies more plausible? It turns out, that according to the canonical way of modeling countries' non-cooperative climate policy, the free-rider effect should indeed produce a unique footprint, beyond the basic finding that climate policies are widely perceived as being ``too lax''. Namely, if the free-rider hypothesis is valid, then free-riding should manifest itself in a pronounced {\em country-size effect}: All else equal, countries with larger populations should implement more stringent emissions control than smaller countries. This is because, according to that model, countries internalize climate damages that accrue within their own boundaries, but neglect damages that occur in other countries. Hence, under non-cooperative decision-making, countries with a larger population internalize more of the externalities than smaller countries.

Intuitively, if a country like the US reduces its emissions substantially, then some of the most adverse effects of climate change, some of which also accrue inside of the US, might be mitigated. Hence, a large country like the US should have {\em some} incentive to regulate its emissions unilaterally. By contrast, a country with a smaller population, such as Sweden, should have very little incentive to act unilaterally: While the per capita costs of emissions control for a given reduction target (relative to the country's emissions) may be comparable to those in the US, the impact on climate damages of a country like Sweden should be negligible, while they can be significant for the US.

The sharp theoretical prediction of a pronounced country-size effect on climate policy offers an opportunity to {\em test} the free-rider hypothesis. In this paper, we use data on countries' emission-weighted carbon prices to analyze if a country-size effect as predicted by the theory under the free-rider hypothesis can be empirically confirmed, thereby controlling for other cross-country differences. If a positive country-size effect on countries' carbon prices can be identified, this points to free-riding as a likely explanation for countries' overall reluctance to implement stringent climate policies. By contrast, if a country-size effect {\em cannot} be identified in the data, then this points to {\em other} possible explanations for why countries find it difficult to tackle the climate problem.

In Section~\ref{sec:model}, we present a simple theoretical framework to analyze how free-riding would manifest itself in countries' carbon pricing policies, based on the canonical way of modeling countries' decision-making in climate policy. We demonstrate that when countries set their carbon prices non-cooperatively, then a pronounced country-size effect is predicted: a country's carbon price should be proportional to its population size. By contrast, if the decision-maker in a country accounts for global social welfare when setting climate policy, then no country size effect on the country's carbon price is predicted. We show that this holds independently of whether the policy makers in the other countries also focus on global welfare, or on the welfare of their own population. Furthermore, we extend the modeling framework to allow for the formation of a climate coalition, in the tradition of Barrett's (1994) seminal contribution, thereby accounting for different population sizes of the countries. We demonstrate that our theoretical predictions continue to hold and are even exacerbated: the {\em largest} countries tend to form a climate coalition and, thus, set higher carbon prices than all other countries.

In Section~\ref{sec:empirical}, we analyze recent data on countries' emission-weighted carbon prices to test the free-rider hypothesis empirically, based on the country-size effect that the theory predicts. We first investigate using linear regression analysis if a country's share of the world population is positively associated with its carbon price, using  emission-weighted average of sector(-fuel) price data from Dolphin (2022). We find, that a positive country-size effect on countries' carbon prices cannot be identified. This holds true also when including a host of other explanatory variables to control for determinants of existing carbon price values, such as GDP per capita, or projected climate damages per capita, among others. %Also in this case, we cannot identify a positive relationship between population size and countries' carbon prices.
\mycomment{
[[We also consider logistic regression models to analyze if a country's population share has predictive power on whether the country has implemented a carbon price, again controlling for other country characteristics. While a positive country size effect is identified in some of the regressions, the results are strongly dependent on a single observation (on China) and are, thus, not robust. Without this observation, again no country size effect can be identified.
}

We, thus, conclude that according to our approach based on a country-size effect that the theory predicts, the free-rider hypothesis cannot be supported using actual data on carbon pricing across the globe. This points at other plausible explanations that may be hampering countries' efforts to tackle the climate problem in practice. It is indeed an important insight, because depending on the source of the lack of countries' ambition to protect the climate, different policy approaches may be suitable. 
Among the other plausible explanations for why countries find it difficult to address the climate problem are: competitiveness concerns, carbon leakage, vested interests, lobbyism, distributional concerns, and a lack of public support for ambitious climate policy measures (or a lack of knowledge about climate change). In other words, even if the government of a country does {\em not} seek to free-ride on the efforts of other countries to protect the climate and takes global social welfare considerations into account in its domestic policy-making, it may nevertheless fail to implement stringent climate policy measures for the above-mentioned reasons. 
If that holds true, then the adequate approaches to addressing the climate problem may be different from a world where free-riding is (considered to be) the key issue. For example, suppose countries are reluctant to implement sufficiently high carbon prices, primarily because they fear a reduction in the international competitiveness of certain industries and job losses. Then an adequate response may be the introduction of border carbon adjustment measures (BCA) to alleviate competitiveness effects of their climate policies. By contrast, if free-riding were the main issue, then measures targeted specifically at the free-rider problem, such as trade sanctions or other measures to induce countries to cooperate on climate protection (see Nordhaus, 2015), might be most effective.

\subsection{Related literature}

A country-size effect on carbon prices under non-cooperative decision-making of countries has been identified by Nordhaus (2015). However, the effect has been identified within a more narrow modeling framework (with constant marginal benefits of emission reductions and quadratic costs).\footnote{Nordhaus (2015) also uses a different notion of `country size'. He focuses on GDP, whereas our focus is on population size. However, if countries differ only in population size and are otherwise identical, then the two notions are equivalent.} We re-establish the effect within a more general modeling framework that builds on canonical assumptions from the game-theoretic literature on climate policy, and that does not impose specific functional forms for countries' benefits and costs of abatement. We thereby show that the country-size effect holds more generally. Furthermore, while Nordhaus (2015) mentions the effect, he does not draw the conclusion that this effect may be used to test the free-rider hypothesis. To the best of our knowledge, this paper is the first that attempts to isolate the effect of free-riding on countries' carbon pricing policies in actual data. Based on the country-size effect that the theory predicts, we are thus able to test the free-rider hypothesis, thereby distinguishing it from other likely explanations for why countries seem reluctant to implement stringent climate policies in practice.

The effects of asymmetries between countries have also been analyzed formally in the literature on climate coalition formation (e.g., McGinty 2007; Kolstad 2010; Fuentes-Albero and Rubio 2010).  Finus and McGinty (2019) allow for asymmetric benefits and costs of climate mitigation and show that under certain conditions, in particular a negative covariance between benefit and cost parameters, asymmetries can lead to the formation of larger climate coalitions. Yet, to the best of our knowledge, the case where countries differ only in their population size and are otherwise symmetric, has not been analyzed before. It turns out that the analysis of this case is simple, but yields sharp predictions. The key results from the non-cooperative analysis are exacerbated. Namely, we find that also when countries can form a climate coalition, the theory predicts a pronounced country-size effect on countries' carbon prices. Our theoretical results contribute to the game-theoretic literature on climate cooperation. We demonstrate that when countries differ only in population size, then at most two countries form a climate coalition. These are typically the largest two countries (the largest country is always a member of the coalition).

The country-size effects on carbon prices under non-cooperative and cooperative decision-making resemble theoretical findings from the literature considering country size and the voluntary provision of global public goods (Boadway and Hayashi, 1999; Shrestha and Feehan, 2002/2003; Loeper, 2017; Buchholz and Sandler, 2021 and references therein). Yet, in the context of climate policy,  there is reason to doubt whether free-riding is the main driver of lax policies around the globe. For example, Kornek et al.\ (2020) survey experts from the United Nations Framework Convention on Climate Change (UNFCCC) and from the Intergovernmental Panel on Climate Change (IPCC) on obstacles and response options for climate policy. They find that ``Opposition from special interest groups (for example emission-intensive industries)'' is on aggregate perceived as the most important obstacle. The option ``Global public-good nature of mitigation and free-riding incentives'', by contrast, is perceived as an ``extremely'' or ``very important'' obstacle to climate change mitigation policy less frequently than most other options in their survey.

Our results in a companion paper (Drupp, Nesje, Schmidt, 2022) point in a similar direction. There, we analyze data from a survey among experts who have published on carbon pricing. One of the key findings is that experts' unilateral carbon price recommendations (when border carbon adjustment is available) in aggregate are {\em higher} than experts' global carbon price recommendations. This finding is in stark conflict with the free-rider hypothesis, according to which unilateral carbon prices should be lower than a uniform global price, under which (theoretically) the global welfare optimum can be reached. Hence, the results from the expert survey do not support the free-rider hypothesis. %, and even find evidence for ``ride-sharing'' (the opposite of free-riding).
Whereas the survey focused on experts' price recommendations, in the present paper, we instead focus on countries' {\em implemented} carbon prices. Yet, our results point in the same direction: we do not find support for the free-rider hypothesis based on countries' weighted nationally implemented carbon prices.

Furthermore, the theoretical analysis in this paper offers a possible explanation for why experts' unilateral carbon price recommendations in Drupp, Nesje, Schmidt (2022) even {\em exceed} their global ones. To this end, in the present paper, we consider (among other things) a scenario where the decision-maker in one country has global social welfare in mind when setting their carbon price, whereas the decision-makers in all other countries seek to maximize their national welfare. We demonstrate that in such a case, the ``altruistic'' decision-maker sets a unilateral carbon price that exceeds the uniform one in the global social welfare optimum. This holds, if climate damages are convex in global emissions, so that the altruistic policy-maker seeks to avert the worst climate damages globally with the help of a unilateral carbon price.\footnote{Hence, the counter-intuitive finding in Drupp, Nesje, Schmidt (2022) may be (partially) explained by asserting that many experts who participated in the survey had global social welfare in mind when providing unilateral carbon price recommendations for their government, while expecting non-cooperative policy-making in other countries.}

Our paper also contributes to a growing strand of literature that analyzes countries' climate policies empirically. Levi et al.\ (2020), for instance, identify political economy determinants of carbon pricing. These authors use a Tobit regression model to analyze carbon prices across more than 200 national and subnational jurisdictions. They identify good governance indicators and public belief in climate change as the most important predictors for carbon pricing, and fossil fuel consumption as a key barrier to the implementation of carbon prices.\footnote{Best and Zhang (2020) perform a related analysis of the magnitude of carbon prices across countries, using a variety of environmental, social, political, and economic variables. Broadly speaking, their results point in a similar direction as the results of Levi et al.\ (2020). See also Dolphin et al.\ (2019) for another related study on the political economy of the introduction of carbon prices, as well as their magnitude. The diffusion of carbon pricing policies across countries is analyzed by Linsenmeier et al.\ (2022a). See also Linsenmeier et al.\ (2022b).} Similarly as these authors, we also use a number of country characteristics to analyze countries' actual carbon prices. However, unlike these authors, we focus explicitly on counties' population sizes that the theory predicts to be a key driver of countries' climate policies, if the free-rider hypothesis holds true.
Regarding the effectiveness of carbon prices as a means to regulate emissions, Best et al.\ (2020) provide evidence (using data from 142 countries, over two decades) that the annual growth rate of CO$_2$ emissions is lower in those countries with a carbon price in place, than in those without.

\medskip

The remainder of this paper is organized as follows. Section~\ref{sec:model} lays out our simple theoretical model and provides key predictions from the theory. These provide the backdrop against which countries' carbon prices are analyzed empirically in Section~\ref{sec:empirical}. Specifically, our goal is to test the free-rider hypothesis, based on the (theoretically predicted) positive relation between the population share of a country and the stringency of its climate policy, measured by its implemented carbon price. Section~\ref{sec:conclusion} concludes. Parts of the theoretical and empirical analysis as well as formal proofs are relegated to the Appendix.

\section{Model} \label{sec:model}

We consider a simple model in which the world is divided into $N$ countries. Countries differ in population size, but are otherwise symmetric. We choose this modeling approach in order to isolate any country-size effects under non-cooperative decision-making. While real-world countries are asymmetric in various other dimensions, country-size effects should prevail in empirical analyses, unless the various dimensions in which countries differ are perfectly correlated (which would appear to be implausible).

We further assume, as is frequently done in the game-theoretic literature on non-cooperative climate policy and on climate cooperation, that countries' social welfare is interdependent only via the climate externality that they inflict upon each other (see Nesje 2022 for a treatment of interdependent utilities). Hence, we abstract from other types of interdependencies, such as carbon leakage and international trade.

Denote by $b(\cdot)$ and $c(\cdot)$ the benefit and cost functions. Let the utility of a representative person in country $i$ be
\begin{equation} \label{utility}
u_i(a_1, ..., a_N) = b(A)-c(a_i),
\end{equation}
where $a_i$ denotes the abatement of emissions per capita in country $i$, and $A$ is the aggregated abatement of all countries.\footnote{The emission of greenhouse gases is a transboundary environmental problem, i.e., climate damages in a country depend primarily on the aggregated emissions of all countries.} Let $P_i$ be the population size of country $i$. The total abatement of country $i$ is then given by $A_i \equiv P_i \cdot a_i$, and the aggregated abatement of all countries is
\[
A = \sum_{j=1}^{N} A_j = \sum_{j=1}^{N} P_j a_j.
\]

In the following we assume that countries differ only in their population sizes ($P_i$), but are otherwise ex ante identical. Our goal is to analyze how the population size of country $i$ affects country $i$'s climate policy under non-cooperative decision-making, as well as in a situation where the government in country $i$ seeks to maximize global rather than national social welfare. Additionally, we will endogenize the formation of a climate coalition to show that under (partial) climate cooperation, the country-size effect on carbon prices may be exacerbated, compared to the fully non-cooperative case.

\subsection{Non-cooperative decision-making} \label{sec:non:coop}

We begin with the case of non-cooperative decision-making. This is usually the ``standard'' case analyzed in the game-theoretic literature on climate policy. The idea is, that the government (or decision-maker) in country $i$ seeks to maximize the social welfare of the population of country $i$, thereby neglecting the social welfare of people living in other countries. In the context of our model, this means that the government in country $i$ neglects the climate externality of country $i$'s emissions on other countries. %Equivalently, this means that the positive abatement externality on other countries is neglected in country $i$'s decision-making.

Formally, the decision-maker in country $i$, thus, seeks to maximize the utility of the representative person in country $i$:
\[
\max_{a_i} u_i(a_1,...,a_N) = b(A_{-i} + P_i a_i) - c(a_i),
\]
where $A_{-i} \equiv \sum_{j \neq i} A_j$ is the abatement of all other countries, and country $i$'s own abatement is $A_i=P_i \cdot a_i$. Assuming that $b(\cdot)$ is (weakly) concave, and $c(\cdot)$ is strictly convex,\footnote{At least one of these functions has to be non-linear to assure existence of an interior solution.} the first-order condition associated with the above maximization problem is
\begin{equation} \label{eq:non:coop}
P_i \cdot b'(A) = c'(a_i).
\end{equation}
In a non-cooperative Nash equilibrium, condition~\eqref{eq:non:coop} has to hold for every country $i$ individually. Hence, this constitutes a set of $N$ equilibrium conditions, one for each of the $N$ countries.

Assuming that each of the $N$ countries regulates its emissions with the help of a carbon price, let us denote country $i$'s unilateral carbon price by $\tau_i$. Since individuals (by assumption) do not internalize the climate externalities of their emissions, the decision-maker in country $i$ has to set $\tau_i$ such that $\tau_i = c'(a_i)$ to implement the optimal abatement level $a_i$ (as defined by \eqref{eq:non:coop}). Hence, we obtain that
\begin{equation} \label{tau:non:coop}
\tau_i = P_i \cdot b'(A).
\end{equation}
This insight is key when establishing the following proposition:
\begin{proposition} (Nordhaus, 2015) \label{prop:non:coop}
The level of the non-cooperative carbon price in country $i$ is proportional to the population size of country $i$, $P_i$.\footnote{Nordhaus (2015) establishes in a related framework that ``a country's noncooperative carbon
price is equal to the country share of output times the global social cost of carbon''. He notes that this result ``survives alternative specifications of the abatement-cost function'', but does not clarify under what condition(s) it holds more generally, beyond the specification that he uses (with linear climate damages and quadratic abatement costs). We offer a general condition.}
\end{proposition}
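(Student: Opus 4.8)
The plan is to read the conclusion essentially directly off equation~\eqref{tau:non:coop}, which the preceding derivation has already established as $\tau_i = P_i \cdot b'(A)$. The one crucial observation is that the right-hand side factors into a country-specific piece, $P_i$, times a term $b'(A)$ that is \emph{common across all countries}. First I would stress that $A = \sum_{j=1}^{N} P_j a_j$ is the \emph{aggregate} abatement of the entire world, so in any fixed Nash equilibrium it is a single scalar, the same number no matter which country one looks at. Hence $b'(A)$ takes one and the same value for every $i$.

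Setting $\lambda \equiv b'(A)$, equation~\eqref{tau:non:coop} reads $\tau_i = \lambda P_i$ with $\lambda$ independent of $i$. This is exactly the asserted proportionality: within a given equilibrium the ratio $\tau_i / P_i = \lambda$ is constant across countries, and since benefits increase with total abatement we have $\lambda = b'(A) > 0$, so larger countries set strictly higher carbon prices. Because the model posits that all countries share the same functions $b(\cdot)$ and $c(\cdot)$ and differ only in $P_i$, the ``all else equal'' qualifier of the country-size effect is built in. For completeness I would re-verify the two inputs to \eqref{tau:non:coop}, both already in the text: the first-order condition \eqref{eq:non:coop}, $P_i b'(A) = c'(a_i)$, obtained by differentiating $b(A_{-i} + P_i a_i) - c(a_i)$ in $a_i$ (using $\partial A / \partial a_i = P_i$); and the decentralization identity $\tau_i = c'(a_i)$, which says the carbon price must equal marginal abatement cost to implement the optimal per-capita abatement $a_i$. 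Combining the two gives \eqref{tau:non:coop}.

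The step needing the most care is conceptual rather than computational. The proportionality must be stated as a \emph{cross-sectional} property of a single equilibrium, not as a comparative-statics claim that perturbing one $P_i$ rescales $\tau_i$ proportionally; perturbing $P_i$ would move the equilibrium value of $A$ and hence $\lambda$, so proportionality in that second sense generally fails. The clean reading is that, holding the equilibrium fixed, $\tau_i / P_i$ equals the common constant $b'(A)$ for every country, and I would flag this explicitly to forestall the misinterpretation.
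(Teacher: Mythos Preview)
Your proposal is correct and matches the paper's approach: the paper states that Proposition~\ref{prop:non:coop} follows directly from the discussion in the main text, i.e., from equation~\eqref{tau:non:coop}, which is exactly what you do. Your explicit observation that $b'(A)$ is a single common scalar in a given equilibrium, together with your careful distinction between the cross-sectional reading and a comparative-statics reading, adds useful precision beyond what the paper spells out.
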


It follows directly from our assumption, that abatement externalities are internalized within the population of country $i$ (but neglected across borders). Although the result is straight-forward and, thus, not very surprising, it has important implications. In particular, it establishes a sharp theoretical connection between the size of a country (specifically its population size), and the stringency of the country's climate policy under non-cooperative decision-making, in particular the level of the carbon price that this country would implement. If the free-rider hypothesis is correct, then such a country-size effect should be visible in the data. We will return to this issue further below.

Another interesting implication of Proposition~\ref{prop:non:coop} is, that when the non-cooperative carbon prices of two countries are compared with each other, their {\em ratio} equals the ratio of the population sizes of these countries (see \eqref{tau:non:coop}):
\[
\frac{\tau_i}{\tau_j} = \frac{P_i}{P_j}.
\]
The following corollary describes this result:
\begin{corollary} \label{corr:1}
The ratio of the non-cooperative carbon prices of two different countries is {\em independent} of the specification of the benefit and cost functions of abatement, and equals the ratio of the population sizes of these countries.
\end{corollary}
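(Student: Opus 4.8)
The plan is to read the result straight off equation~\eqref{tau:non:coop}, which already expresses each country's non-cooperative carbon price as $\tau_i = P_i \cdot b'(A)$. The single observation that does all the work is that the argument $A = \sum_{j} P_j a_j$ appearing here is the \emph{same} equilibrium aggregate abatement for every country: in a Nash equilibrium the first-order conditions~\eqref{eq:non:coop} are all evaluated at one common abatement profile $(a_1,\dots,a_N)$, hence at one common value of $A$. Consequently $b'(A)$ is a single scalar shared across all the $\tau_i$, rather than a country-specific quantity.

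First I would fix two distinct countries $i \neq j$ and write out their prices from~\eqref{tau:non:coop}, namely $\tau_i = P_i\, b'(A)$ and $\tau_j = P_j\, b'(A)$, with the identical factor $b'(A)$ in both. Forming the ratio and cancelling this common factor then yields $\tau_i/\tau_j = P_i/P_j$ immediately. The independence from the functional forms is then transparent: the cost function $c(\cdot)$ never enters the ratio at all, and the only imprint of the benefit function $b(\cdot)$ was the factor $b'(A)$, which cancels; what survives depends solely on the two population sizes.

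The one point that needs a word of justification---and the only place where the argument could fail---is the legitimacy of the cancellation, i.e.\ that $b'(A) \neq 0$ so that division is valid. I would obtain this from the first-order condition itself: since $c(\cdot)$ is strictly convex and abatement is interior (so that $c'(a_i) > 0$), the equality $P_i\, b'(A) = c'(a_i)$ in~\eqref{eq:non:coop} forces $b'(A) > 0$. With strict positivity of the common factor secured, the cancellation is valid and the ratio is well defined. Throughout I would treat the statement as conditional on the existence of a non-cooperative equilibrium (guaranteed by the concavity and convexity assumptions together with the interiority assumption stated above), since the claim is a property of the equilibrium carbon prices. Given how directly the conclusion drops out of~\eqref{tau:non:coop}, I do not anticipate any substantive obstacle beyond this well-definedness check.
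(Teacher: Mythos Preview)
Your proposal is correct and matches the paper's own reasoning: the corollary is derived directly from equation~\eqref{tau:non:coop} by observing that $b'(A)$ is a single equilibrium scalar common to all countries and therefore cancels in the ratio $\tau_i/\tau_j$. Your explicit check that $b'(A)>0$ (so the cancellation is legitimate) is a small refinement the paper leaves implicit.
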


That we find a clear ordering of countries' non-cooperative carbon prices, which is independent of the underlying benefit and cost functions, makes the predictions of the theory particularly robust. This is a useful property that allows us to test the free-rider hypothesis, without relying on specific functional forms or estimates of marginal abatement costs.\footnote{For illustration, consider three countries with similar GDP per capita, but very different population sizes: the US (331 million in 2020), Japan (126 million), and Sweden (10 million) (source: https://www.worldometers.info/population/world/, visited Feb 05, 2022.)
By Corollary~\ref{corr:1}, the theory predicts
$\frac{\tau_{US}}{\tau_{Japan}} \approx 2.6 \, ,  \,
\frac{\tau_{US}}{\tau_{Sweden}} \approx 33$.
Based on data from Dolphin (2022), the emission-weighted carbon prices implemented in 2020 in these countries were approximately (in US-Dollars): 0.73 in the US, 1.93 in Japan, and 67.4 in Sweden, so the actual data reveal the {\em opposite} ordering of carbon prices than predicted by the theory:
$\frac{\tau_{US}}{\tau_{Japan}} \approx 0.38 ,
\frac{\tau_{US}}{\tau_{Sweden}} \approx 0.01$.}
While estimates of the benefit and cost functions will vary across countries in practice, country-size effects (as predicted by the theory) should in aggregate, across a large set of countries, nevertheless prevail in the data, if the free-rider hypothesis is valid. %We will return to this in Section~\ref{sec:empirical} where we test the free-rider hypothesis empirically.

\subsection{Global social welfare oriented policy-making}

Now suppose, that the decision-maker in country $i$ seeks to maximize global social welfare when deciding about the climate policy of country $i$. For the time being, we leave open the question, if the decision-makers in the other countries are also global social welfare oriented in their decision-making, or if they focus on domestic social welfare (as we assumed before). The only assumption that we make in the following is, that at least within {\em one} country (here: country $i$), the decision-maker is global social welfare oriented.

Let us denote global social welfare by
\[
W(a_1,...,a_N) \equiv \sum_{j=1}^{N} P_j \cdot u_j(a_1,...,a_N).
\]
Then the maximization problem of the global social welfare oriented decision-maker in country $i$ reads
\[
\max_{a_i} W(a_1,...,a_N) = \sum_{j=1}^{N} \big[ P_j \cdot b(A_{-i} + P_i \cdot a_i) \big] - P_i \cdot c(a_i) - \sum_{j \neq i} P_j \cdot c(a_j) ,
\]
where we have isolated the impact of the abatement (per capita) within country $i$, $a_i$, upon social welfare. Notice, that the aggregated abatement: $A=A_{-i} + P_i \cdot  a_i$, does not depend on the summation index $j$ in the first term on the right-hand side of the above equation. Hence, this term can be written more conveniently as $b(A) \cdot \sum_{j=1}^{N} P_j$, which is simply $b(A) \cdot P$, if we denote by $P \equiv \sum_{j=1}^{N} P_j$ the aggregated (global) population size.

The first-order condition associated with the above maximization problem, thus, reads (after dividing both sides by $P_i$)
\begin{equation} \label{eq:coop}
P \cdot b'(A) = c'(a_i).
\end{equation}
The carbon price that the decision-maker in country $i$ uses to implement the abatement target implied by this condition is
\begin{equation} \label{tau:coop}
\tau_i = P \cdot b'(A).
\end{equation}
This leads us directly to the following result:
\begin{proposition} \label{prop:coop}
If the decision-maker in country $i$ is global social welfare oriented, then the carbon price implemented unilaterally in country $i$, $\tau_i$, only depends on the aggregated world population, $P$, and not on the (relative) population size of country $i$, $P_i / P$.
\end{proposition}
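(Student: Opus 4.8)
The plan is to obtain the unilateral carbon price of the global-welfare-oriented country $i$ directly from its first-order condition, and then to read off its population dependence by comparing the multiplicative factor in front of $b'(A)$ with the one appearing in the non-cooperative formula~\eqref{tau:non:coop}. First I would write global welfare as $W=\sum_{j=1}^N P_j\,u_j=\sum_{j=1}^N P_j\big[b(A)-c(a_j)\big]$ and substitute the decision variable of country $i$ through $A=A_{-i}+P_i a_i$. The decisive algebraic step is to observe that the benefit term is a function of the aggregate $A$ alone and therefore does not carry the summation index $j$; summing the population weights then yields $\sum_{j=1}^N P_j\,b(A)=b(A)\sum_{j=1}^N P_j=P\cdot b(A)$. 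This is exactly where the world population $P$, rather than the own population $P_i$, enters, and it is the crux of the whole argument.

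Next I would differentiate the objective with respect to $a_i$. Using $\partial A/\partial a_i=P_i$, the benefit contribution is $P\cdot b'(A)\cdot P_i$, while only country $i$'s own cost term $P_i\,c(a_i)$ depends on $a_i$, contributing $-P_i\,c'(a_i)$. Setting the derivative to zero and dividing through by $P_i$ cancels the own-population factor and leaves $P\cdot b'(A)=c'(a_i)$, i.e.\ condition~\eqref{eq:coop}. Since individuals do not internalize the externality, the decision-maker must set $\tau_i=c'(a_i)$ to implement this abatement level, so $\tau_i=P\cdot b'(A)$, which is~\eqref{tau:coop}.

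The conclusion then follows by inspecting this expression. In the non-cooperative case the price is proportional to the own population, $\tau_i=P_i\,b'(A)$, so that two countries' prices scale with their population ratio (Corollary~\ref{corr:1}); here the multiplicative factor is instead $P$, which is common to every country, so the proportional country-size effect disappears, and any two global-welfare-oriented countries set the same carbon price irrespective of their relative sizes $P_i/P$. I would also stress that the derivation of country $i$'s first-order condition uses only country $i$'s own objective, so the expression $\tau_i=P\,b'(A)$ is valid whether the remaining countries act cooperatively or non-cooperatively; the behavior of the others enters only through the equilibrium value of the aggregate $A$.

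The main obstacle is precisely this last point: strictly speaking $\tau_i$ can still depend on the population distribution \emph{indirectly} through the equilibrium level of $A$ (for instance in a mixed regime where the other countries set $\tau_j=P_j\,b'(A)$, the equilibrium $A$ responds to $P_i$). The claim to be stated carefully, therefore, is that the \emph{explicit} population factor in the pricing rule is $P$ and not $P_i$, so there is no direct, proportional country-size effect of the kind present in~\eqref{tau:non:coop}; in the fully cooperative symmetric benchmark, where all $a_j$ coincide and $A=P\,a^{\ast}$ is pinned down by $P$ alone, the independence from $P_i/P$ is in fact exact. I would make this distinction explicit rather than claim unconditional independence.
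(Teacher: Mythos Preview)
Your argument is correct and mirrors the paper's own derivation almost verbatim: the paper obtains \eqref{eq:coop} and \eqref{tau:coop} exactly as you do (factoring $b(A)\sum_j P_j=P\,b(A)$, differentiating, and cancelling $P_i$), and then simply declares that Proposition~\ref{prop:coop} ``follows directly from the discussions in the main text.'' Your closing caveat about a possible indirect dependence on $P_i$ through the equilibrium $A$ is a valid refinement that the paper leaves implicit, but the core approach is the same.
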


Hence, irrespective of the size of the respective country, any decision-maker who is global welfare oriented would set the {\em same} carbon price (as defined by \eqref{tau:coop}) according to our model. In other words: In contrast to the non-cooperative case, there is {\em no} country-size effect on the domestic carbon price of a country that seeks to maximize global social welfare.

Until now, we have left open the question, how the decision-makers in the other countries determine their carbon pricing policies, when the decision-maker in country $i$ is global social welfare oriented. For simplicity, let us illustrate how this affects the overall outcome of all countries by considering two extreme cases. The first case is where decision-makers in {\em all} countries behave in this way and, thus, each of them has global social welfare in mind when determining the climate policy for their own country. It is easy to see, that the {\em global social welfare optimum} is then attained.
The overall abatement in this optimum, denoted by $A^o$, solves (using \eqref{eq:coop})
\[
P \cdot b'(A^o) = c'(A^o/P).
\]
In each country, an identical carbon price then prevails that is (by \eqref{tau:coop}) equal to
\begin{equation} \label{tau:global:opt}
\tau^o = P \cdot b'(A^o).
\end{equation}
The per capita abatement target in the global welfare optimum is, then, also identical across all countries, and given by (using \eqref{eq:coop}): $a^o = c'^{-1} (\tau^o)$.

The second case that we consider is where decision-makers in all {\em other} countries behave non-cooperatively, while the decision-maker in country $i$ seeks to maximize global social welfare.\footnote{This case is interesting especially in the context of the findings in our companion paper (Drupp, Nesje, Schmidt 2022), where experts on aggregate recommend higher unilateral carbon prices than at the global level. Our theoretical analysis in the present paper may help to shed light on why.} For simplicity, we refer to this as the case with ``mixed motives'' of the decision-makers. To fix ideas, let us (without loss of generality) assume that the global welfare oriented policy-maker is in country 1 (hence, $i=1$), so that countries $2,...,N$ behave non-cooperatively. Then the overall outcome ($a_1,a_2,...,a_N$) is characterized by the following set of conditions (by \eqref{eq:coop} and \eqref{eq:non:coop}):
\begin{equation} \label{cond:mixed:motives}
P \cdot b'(A) = c'(a_1) , \, \mbox{ and } \, P_j \cdot b'(A) = c'(a_j), \mbox{ for } j=2,...,N,
\end{equation}
where $A=\sum_{j=1}^{N}P_j a_j$. This yields the following carbon prices (by \eqref{tau:coop} and \eqref{tau:non:coop}):
\begin{equation} \label{prices:mixed:motives}
\tau_1 = P \cdot b'(A) , \, \mbox{ and } \, \tau_j = P_j \cdot b'(A), \mbox{ for } j=2,...,N.
\end{equation}
Clearly, the global social welfare oriented decision-maker in country 1 sets a higher carbon price than in all other countries. Furthermore, by comparing \eqref{prices:mixed:motives} with \eqref{tau:global:opt}, we establish the following result:

\begin{proposition} \label{prop:mixed:motives}
If the benefit function $b(\cdot)$ is strictly concave, then in the case with mixed motives, the global social welfare oriented decision-maker in country 1 sets a carbon price that is {\em higher} than the carbon price in the global social welfare optimum.
\end{proposition}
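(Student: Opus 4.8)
The plan is to reduce the claim to a single comparison of aggregate abatement levels and then establish that comparison by contradiction. By \eqref{prices:mixed:motives} the mixed-motive price in country~1 is $\tau_1 = P\cdot b'(A)$, while by \eqref{tau:global:opt} the global-optimum price is $\tau^o = P\cdot b'(A^o)$, where $A$ solves \eqref{cond:mixed:motives} and $A^o$ is the global-optimum aggregate. Since $b(\cdot)$ is \emph{strictly} concave, $b'$ is strictly decreasing, so $\tau_1 > \tau^o$ holds if and only if $A < A^o$. Thus the whole proposition reduces to showing that aggregate abatement under mixed motives is strictly smaller than in the global optimum.

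To prove $A < A^o$, I would argue by contradiction and suppose $A \ge A^o$. Strict concavity of $b$ then gives $b'(A) \le b'(A^o)$. Inverting the first-order conditions (legitimate because $c$ is strictly convex, so $c'$ is strictly increasing and $c'^{-1}$ exists and is increasing), country~1 chooses $a_1 = c'^{-1}\big(P\,b'(A)\big) \le c'^{-1}\big(P\,b'(A^o)\big) = a^o$, and each non-cooperative country $j \in \{2,\dots,N\}$ chooses $a_j = c'^{-1}\big(P_j\,b'(A)\big)$. Here the key wedge appears: because $P_j < P$ and $b'(A^o) > 0$, we have $P_j\,b'(A^o) < P\,b'(A^o)$, so $a_j \le c'^{-1}\big(P_j\,b'(A^o)\big) < c'^{-1}\big(P\,b'(A^o)\big) = a^o$. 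Summing the total abatements then yields $A = P_1 a_1 + \sum_{j\ge 2} P_j a_j < P_1 a^o + \sum_{j\ge 2} P_j a^o = P\,a^o = A^o$, contradicting $A \ge A^o$. Hence $A < A^o$, and the reduction in the first paragraph delivers $\tau_1 > \tau^o$.

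The economic content is that the non-cooperative countries internalize only their own share $P_j/P$ of the global marginal benefit, so at any common aggregate they abate strictly less than the socially optimal per-capita level $a^o$; this free-riding drags the aggregate below $A^o$, which raises the common marginal benefit $b'(A)$ and hence the optimal price $\tau_1 = P\,b'(A)$ that the global welfare oriented country sets. The step I expect to be the main obstacle is ruling out the intuitive possibility that country~1 ``overcompensates'' for the others' free-riding and pushes the aggregate up to or beyond $A^o$. The contradiction argument handles this cleanly: if $A \ge A^o$, then the marginal benefit is so depressed that even country~1, which applies the full factor $P$, is induced to abate no more than $a^o$, so the strictly lower contributions of the free-riders cannot be offset. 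Two mild standing assumptions are used and should be stated explicitly: $b'(A^o) > 0$ (positive marginal benefit at the optimum, needed for the strict wedge) and $N \ge 2$ with $P_j < P$ (so that at least one genuine free-rider exists); I would also note that existence of the interior mixed-motive equilibrium characterized by \eqref{cond:mixed:motives} is assumed, as elsewhere in this section.
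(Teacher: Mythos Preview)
Your proof is correct and follows essentially the same route as the paper: reduce the claim to the comparison $A < A^o$ and then invoke the strict concavity of $b$ via \eqref{prices:mixed:motives} and \eqref{tau:global:opt}. The paper's own proof merely asserts the key step (``by the convexity of $c(\cdot)$, country~1 does not fully compensate''), whereas your contradiction argument spells this out rigorously and also makes explicit the mild side conditions ($b'(A^o)>0$, $N\ge 2$) that the paper leaves implicit.
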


\subsection{Climate cooperation} \label{sec:coop}

%{\color{red}[REPORT RESULTS AS TAX LEVELS AS ABOVE?]}
An influential strand of literature (e.g., Barrett 1994, Karp and Simon 2013) analyzes the formation of a climate coalition that endogenizes the environmental externality between its members. The standard assumption is that individually, each country seeks to maximize the social welfare of its own population (see Section~\ref{sec:non:coop}), but once inside the coalition, all members behave fully cooperatively amongst each other. Each country's participation decision in the coalition, however, remains non-cooperative. As Barrett (1994) shows, countries' free-rider incentives then typically translate into a small number of participating countries, especially if the potential benefits from cooperating are large. This theoretical prediction has been shown to be robust across various specifications of countries' benefit and cost functions, as well as for different modifications of the basic modeling framework.\footnote{E.g., see Barrett (2001, 2006, 2013), Finus and Maus (2008), among others.} %In this context, Kolstad and Toman (2005) coined the term ``Paradox of International Agreements''. 
The goal here is to extend Barrett's (1994) analysis that focuses on ex ante symmetric countries to a setting where countries differ in their population sizes.\footnote{Other authors (e.g., Finus and McGinty 2019) have analyzed asymmetries in the context of climate coalition formation, but to the best of our knowledge, the case where countries differ only in their population size and are otherwise symmetric, has not been analyzed before.} It turns out that the analysis of this case is simple, but yields surprisingly sharp results. As we will show below, these theoretical results once more point in the direction of larger countries having more ambitious climate policies and, thus, higher carbon prices. %In the next section, we will investigate if that prediction can be confirmed also empirically.

We focus on the following utility function of a representative person in country $i$:
\begin{equation} \label{ut:lin:quadr}
u_i(a_1, ..., a_N) = \beta A-\frac{\gamma}{2}a_i^2.
\end{equation}
Hence, we adopt a simple linear-quadratic specification of the utility in \eqref{utility}, as is often used in the literature (e.g., Battaglini and Harstad 2016). %Regarding the cost function, our subsequent analysis can be extended to a function of the type $c(a_i)=\frac{\gamma}{\varphi}a_i^\varphi$ in a straight-forward fashion, but the quadratic case ($\varphi=2$) is chosen most frequently in the literature (and, thus, also standard). Analytically, it yields the simplest results.

In line with Barrett (1994), the game consists of two stages: In stage 1, all countries simultaneously decide whether or not to join the coalition (each of them individually and non-cooperatively). In stage 2, the members of the coalition that formed in stage 1 set their abatement targets cooperatively, so as to maximize their joint social welfare, whereas each of the outsiders sets their abatement target non-cooperatively.\footnote{Given the linear benefits that we assume here, the abatement choices of the coalition members resp.\ of the outsiders can also be treated as two separate stages. This leads to identical results.% Every outsider has a dominant strategy when choosing its abatement target, and also the coalition as a whole has a dominant strategy when choosing it overall abatement target (see below).
} The population size of country $i$ is again denoted by $P_i$, and countries are ordered by their population sizes so that country 1 has the largest and country $N$ the smallest size.\footnote{We rule out by assumption that any two countries have identical population sizes.}

The game is analyzed formally in Appendix~\ref{sec:App:clim:coop}. The following proposition summarizes the main results from this analysis:

\begin{proposition} \label{prop:coalition}
With asymmetric population sizes, in any pure strategy Nash equilibrium of the climate cooperation game, at most two countries join the coalition, and the largest country is always one of them. If $2P_2 < P_1$, country 1 forms a singleton coalition. If $2P_2 = P_1$, either a singleton coalition forms, or a coalition of countries 1 and 2. If $2P_2 > P_1$, the equilibrium coalition size is equal to two. Apart from country 1, another country from the set $\{ \mbox{country 2, country 3, ... , country } l \}$ joins, where country $l$ is the smallest country satisfying $P_l \geq \max \{ \frac{P_1}{2},2P_2-P_1\}$.
\end{proposition}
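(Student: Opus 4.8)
The plan is to solve the two-stage game by backward induction and to reduce the participation stage to the familiar internal/external stability conditions, which here take a remarkably clean form thanks to the linear-quadratic payoffs. First I would solve stage~2 for a fixed coalition $S$. An outsider $j$ maximises $\beta(A_{-j}+P_j a_j)-\frac{\gamma}{2}a_j^2$, giving $a_j=\beta P_j/\gamma$; the coalition maximises $\sum_{k\in S}P_k u_k=\beta P_S A-\frac{\gamma}{2}\sum_{k\in S}P_k a_k^2$, where $P_S\equiv\sum_{k\in S}P_k$, and its first-order conditions yield $a_k=\beta P_S/\gamma$ for every member $k$ — so each member abates in proportion to the \emph{total} coalition size, not its own. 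Substituting back gives the aggregate $A=\frac{\beta}{\gamma}\big(P_S^2+\sum_{j\notin S}P_j^2\big)$, an outsider's indirect utility $\beta A-\frac{\beta^2}{2\gamma}P_j^2$, and a member's indirect utility $\beta A-\frac{\beta^2}{2\gamma}P_S^2$; the associated carbon prices are $\tau_j=\beta P_j$ and $\tau_k=\beta P_S$, which already signals why having the large countries inside the coalition raises prices.

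Next I would reduce stage~1 to a membership game in which a pure-strategy Nash equilibrium is a coalition $S$ that is internally stable (no member gains by leaving) and externally stable (no outsider gains by joining). Using the add/remove identity $A(S\cup\{m\})-A(S)=\frac{2\beta}{\gamma}P_S P_m$, the two equilibrium-minus-deviation payoff differences factor cleanly: for a member $k$ contemplating exit it is $\frac{\beta^2}{2\gamma}(P_S-P_k)(3P_k-P_S)$, and for an outsider $j$ contemplating entry it is $\frac{\beta^2 P_S}{2\gamma}(P_S-2P_j)$. Requiring both to be nonnegative, internal stability is equivalent to $P_S\le 3P_k$ for every member $k$ (binding on the smallest member) and external stability to $P_S\ge 2P_j$ for every outsider $j$ (binding on the largest outsider).

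From these two inequalities the structural claims follow. For \emph{at most two members}: if $|S|\ge 3$ then, since all population sizes are distinct, $P_S$ strictly exceeds three times the smallest member's size, violating internal stability. For \emph{country~1 is always a member}: if country~1 were an outsider to a two-member coalition $\{i,j\}$, external stability would demand $P_i+P_j\ge 2P_1$, impossible since $P_1>P_i,P_j$. It then remains to characterise the stable coalitions of the form $\{1,j\}$: internal stability gives $P_j\ge P_1/2$, while external stability binds on the largest outsider — country~2 when $j\ge 3$ — and yields $P_1+P_j\ge 2P_2$, i.e.\ $P_j\ge 2P_2-P_1$; combining, $\{1,j\}$ is stable iff $P_j\ge\max\{P_1/2,\,2P_2-P_1\}$. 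Because sizes decrease in the index, ``$P_j\ge$ threshold'' translates into the partner set $\{2,\dots,l\}$ with $l$ as stated, and sorting on the sign of $P_1-2P_2$ produces the three cases (the singleton $\{1\}$ is externally stable precisely when $P_1\ge 2P_2$).

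I expect the main obstacle to lie not in the algebra but in the careful treatment of the degenerate configurations. A singleton coalition behaves exactly like a non-cooperative country, so the empty coalition and all singletons are behaviourally identical; I would need to argue that any singleton $\{i\}$ with $i\ne 1$ is destabilised (country~1 strictly wants to join, since $P_i<2P_1$), that $\{1\}$ is externally stable iff $P_1\ge 2P_2$, and that the non-cooperative outcome is the one represented by the singleton $\{1\}$. This is what makes the boundary case $2P_2=P_1$ genuinely admit both $\{1\}$ and $\{1,2\}$, and the case $2P_2>P_1$ force an effective two-country coalition. Getting the weak-versus-strict inequalities right at the boundary, and confirming that at least one valid partner always exists (country~2 always qualifies, so the equilibrium set is nonempty), will be the delicate part.
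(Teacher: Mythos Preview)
Your proposal is correct and follows essentially the same route as the paper: solve stage~2 to obtain the member/outsider payoffs, derive the single participation condition (the paper writes it as $2P_i\ge P_{s,-i}$, equivalent to your $P_S\le 3P_k$ for members and $P_S\ge 2P_j$ for outsiders), and then case-split on the sign of $2P_2-P_1$. Your factored payoff differences and the direct ``$|S|\ge3\Rightarrow P_S>3\min_{k\in S}P_k$'' argument are slightly cleaner than the paper's perturbation-from-symmetry step, but the logic is the same.
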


This implies that under asymmetric population sizes (and otherwise identical countries), the equilibrium outcome is either unique, or there is some multiplicity of equilibria, reflecting an indeterminacy of the identity of the ``other country'' that joins the coalition, apart from country 1. However, as we argue in the Appendix~\ref{sec:App:clim:coop}, the degree of multiplicity is typically small, and the natural equilibrium candidate (or focal point) is that the {\em largest two countries} form a coalition if $2P_2 > P_1$ holds.

We conclude that also when countries can form a climate coalition, we should expect higher carbon prices in bigger countries, and lower carbon prices in smaller ones. The formation of a coalition of two large countries (the largest one, and typically the second-largest, or a country with almost the same population size as in the second-largest country) exacerbates our earlier prediction that under non-cooperative policy-making, we expect to see {\em higher carbon prices in larger countries than in smaller ones}. We will test this prediction in the following section.

\section{Empirical analysis} \label{sec:empirical}

We focus our empirical analysis on whether there is evidence supporting the existence of a country-size effect in real world carbon pricing data. To do so we use the emission-weighted average of sector(-fuel) price data (ECP) from Dolphin (2022),\footnote{The prices are the existing total prices including any potential rebate and are expressed in 2019USD/tCO2e. Dolphin (2022) uses verified emissions data to calculate the shares of sector-level (or sector-fuel-level) in total jurisdiction (national or subnational) CO$_2$ emissions. Additionally, the ECP is calculated separately for carbon taxes and emission trading schemes (ETSs) and the combination of both. We use the national CO$_2$ prices for 2020 for the combination of carbon taxes and ETSs, as multiple jurisdictions covered contain both kind of instruments.} as well as population sizes from the World Bank's World Development Indicators data set (World Bank 2022). For each country we make use of the most recent accessible data in each data set, with the vast majority of values from 2020.

We start by investigating the relationship between the emission-weighted carbon prices and the population size of 195 countries via linear regression analysis, assuming that a nonexistent emission-weighted carbon price is treated as a price of \$0.
Across the considered theoretical models, the free-rider hypothesis predicts a positive association between the two. We find, however, that there is no significant univariate relationship between ECP and population size by using ordinary least squares for estimating the coefficients of a linear regression model.  This result holds true if we analyze the complete data set, if we omit the observations from the two countries with the largest population sizes (China and India),\footnote{The population sizes of these countries exceed those of all other countries drastically and can, thus, be seen as outliers that tend to affect results more strongly than observations from other countries. Omitting these observations is, thus, a simple way to control for their impact on the results.} and also if we only consider European countries. For each of these three sets, we also consider the subsets containing only the countries with existing carbon prices.

\begin{figure}[h]
    \centering
    \includegraphics[width=\textwidth]{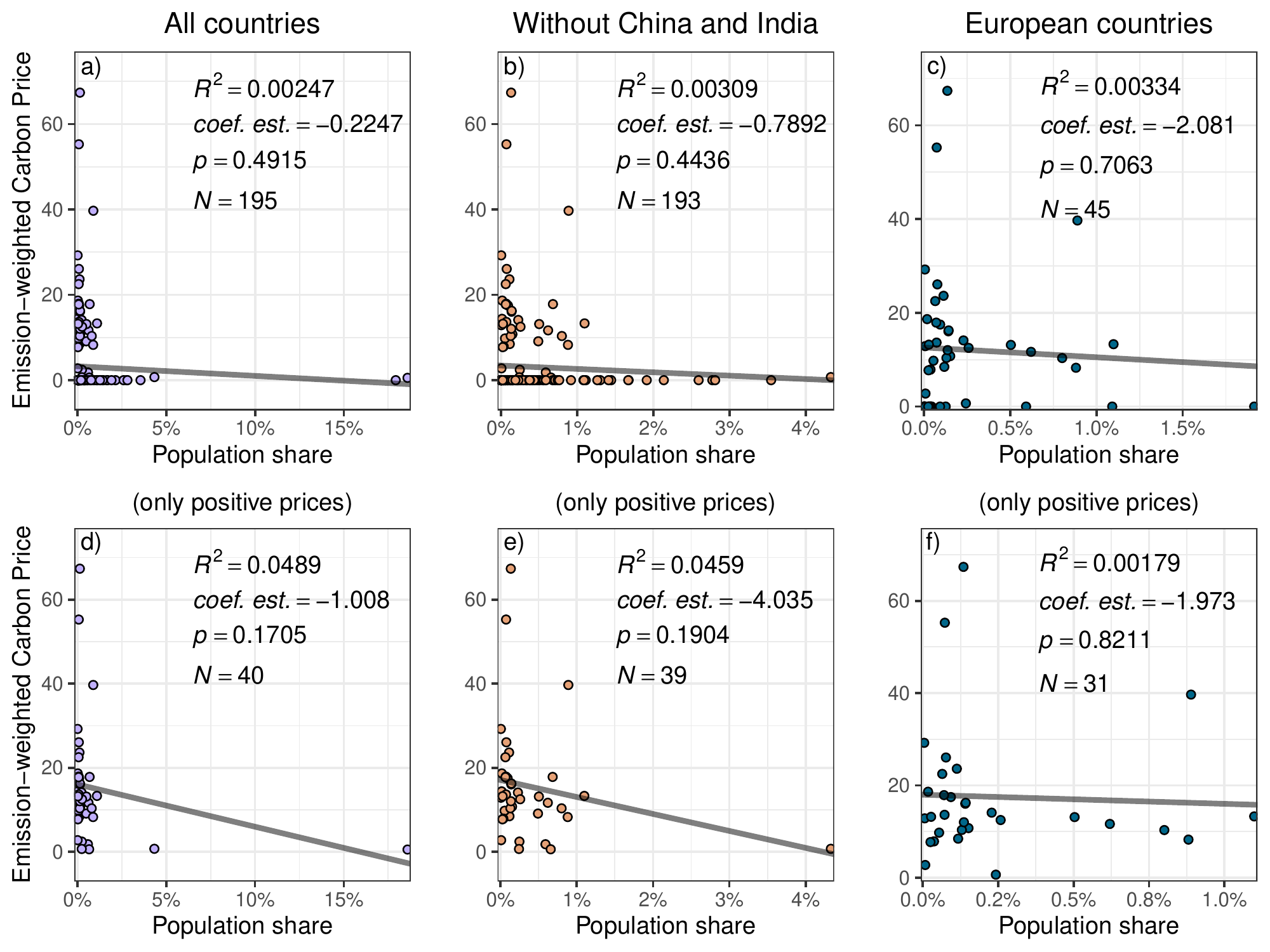}
    \caption{Scatterplots with linear regression lines and summary statistics for different subsets. Panel a) to c) consider different sets of countries including those without carbon prices, coded as 0\$. Panel d) to f) consider the same sets of countries as above, but omitting all observations with prices of \$0.}
    \label{fig:regressions}
\end{figure}

We summarize the results of the regression models in Figure \ref{fig:regressions}. Panel a) contains the result of the regression for the complete data set, containing 195 countries. It displays a scatter plot, with the carbon price values on the vertical and countries' shares of the global population (in percent) on the horizontal axis. In addition, the grey line indicates the fitted regression line and in the top right corner are values summarizing the regression model, including the coefficient of determination $R^2$, the coefficient estimation for the variable population $\beta$, the corresponding $p$-value, and the sample size $N$. In the first case, considering all countries, we cannot observe a positive linear relationship between ECP and population share. The estimated coefficient for population share is insignificant ($\beta=-0.22$; $p=0.49$).
Panel b) shows the result of excluding the two countries with the largest population share, China and India, and is similar to the first result, with a very low $R^2$ and a statistically insignificant coefficient.
Panel c) on the top right of Figure \ref{fig:regressions} displays the results of the regression based on the much smaller subset of all 45 European countries. Also in this case, there is no evidence for a positive linear association between the carbon prices and the population share.
Results are similar when considering only the individual subsets of countries with existing (positive) carbon prices, illustrated in the lower row of Figure \ref{fig:regressions} in panels d) to f). Neither for all countries (with and without China and India) nor for the European countries do we find a significant a correlation between carbon prices and population size.

%\subsection{Countries with existing carbon prices}

We proceed by analyzing the relevant subset of countries with existing carbon prices to tackle the problem of the zero-inflated emission-weighted carbon price data. We examine a variety of linear regression models including a number of other key explanatory variables that may also explain the levels of carbon prices, and investigate the relationship between population size and carbon prices. These additional variables stem from different sources, with most of them gathered from the World Bank's World Development Indicators data set (World Bank 2022). In addition to the country's share of the global population, this includes: CO$_2$ emissions per capita, the mean of World Governance Indicators, the share of fossil fuels in the country's energy consumption, and the share of the industry sector in GDP. In addition to that, we consider data on ambient air pollution from the World Health Organisation's Global Health Observatory Data Repository (World Health Organization 2022), data on climate change awareness from the Gallup World Poll (Gallup World Poll 2007/2008), and data on expected climate damages per capita derived from Ricke et al. (2018).\footnote{Since there are missing values in almost all of the considered data sets, we used multiple imputation technique (Azur et al. 2011) to estimate missing values to minimize the bias that could arise from excluding observations with missing values, as these observations are often from small, low-income countries. As a robustness check, we also specified all reported models using case-wise deletion instead of multiple imputation and found no major differences between the two (see Appendix~\ref{missing_values}).} More information about the data sets and the data gathering process can be found in Appendix \ref{data collection}.

To avoid the potential problem of multicollinearity, we calculated the cross-correlation matrix of all independent variables and exclude those combinations of variables with correlation coefficients higher than 0.75. This results in removing the variable of climate change awareness (see Appendix~\ref{cross-correlation}). However the reported results are robust when adding climate change awareness as an additional independent variable.\footnote{Furthermore, we conducted a variance inflation factor (VIF) test for each of the reported model specifications, and obtain no values larger than $4$ across all specifications and variables.}

The specification chart in Figure~\ref{fig:spec} showcases a series of estimates for the coefficient of population size from a wide range of models. Each column corresponds to a different linear regression model.

\begin{figure}[h]
    \centering
    \includegraphics[width=\textwidth]{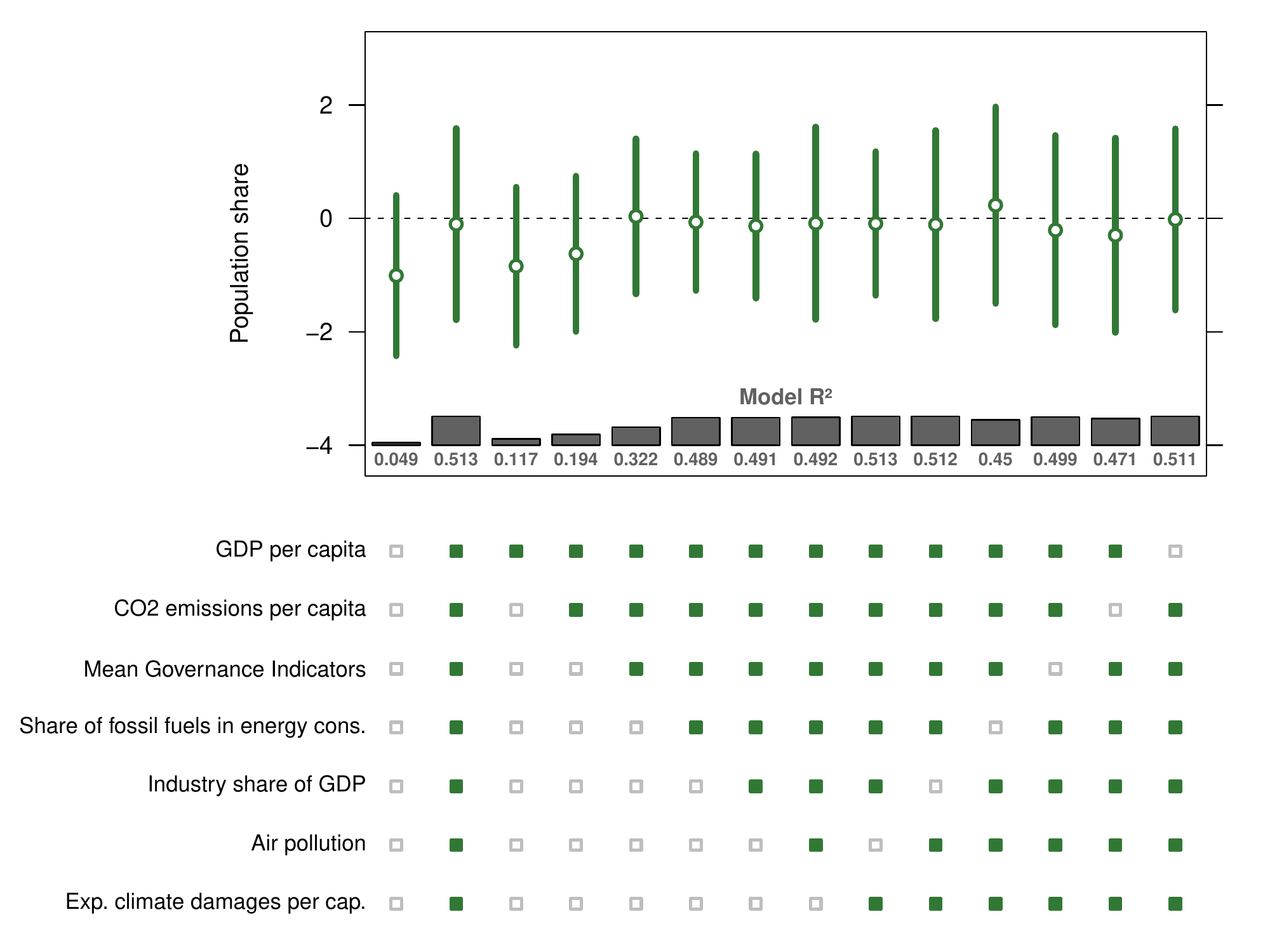}
    \caption{Specification chart for the effect of population shares on carbon prices. Coefficient estimates with confidence intervals of population share for different specifications of linear regression models on imputed data sets (N=40). Column 1 shows the effect of population share for the univariate linear regression model containing only population share as an independent variable. Column 2 shows the effect of population share on carbon prices for the “full" model, containing all selected explanatory variables. Columns 3 to 7 build up upon the univariate model by adding individual variables and columns 8 to 12 show the results for the “full but leave-one-out" model.}
    \label{fig:spec}
\end{figure}

While the various input variables differ between the models, the share of global population as a measure for population size is an explanatory variable contained in all of them. Column 1 corresponds to the simple model regressing emission-weighted carbon prices only on population size with the same results of negative but insignificant coefficient estimation as already discussed for Figure \ref{fig:regressions}. The $R^2$ value of the respective models plotted is given below the coefficient estimates. In the first case this value is 0.049 as already seen in Figure \ref{fig:regressions}.
The lower part of the plot visualizes which explanatory values are part of each regression model. As we can see, column 2 corresponds to the full model, in addition to the population share containing the variables: GDP per capita, CO$_2$ emissions per capita, the mean values of Governance Indicator rank scores, percentage share of fossil fuels in a given countries energy consumption, the share of industry of a countries GDP, ambient air pollution exposure, and expected climate damages per capita. The effect of population share in this setup is not different from zero, meaning that -- when incorporating all additional explanatory variables -- no significant linear relationship between population share and carbon prices can be identified. The next columns (columns 3 to 8) build upon the simple model by adding one input variable after the other. For each of these models the estimated coefficient for population size is slightly negative but insignificant. The last columns (from column 9 on) show results from a ``leave one out approach''. These contain all variables from the full model but one, i.e., leaving one variable out in each case. As before, for all of these models, we find insignificant estimates that are close to zero for the coefficient of population share. To summarize, we do not find a linear relationship between the population size and the emission-weighted carbon prices for the subset of countries with existing carbon prices and over the analyzed models.

\mycomment{
\subsection{Comparing countries with and without a carbon price}
After investigating the subsample of countries that have already implemented carbon prices, we proceed to analyze the difference between countries with established carbon prices and countries without. To do so, we consider logistic regression models, introducing a binary dependent variable with value 1, if a carbon price exists in a given country, and 0 otherwise. With this analysis we test whether any country-size effect can be found between these two groups of countries. To do so, we estimate multiple logistic regressions with the same setup as above. For the complete set of 195 countries, we obtain consistently positive coefficients for all estimated logistic regression models and significant results for most of them (see Appendix \ref{logistic_regression}). However, these results are not robust under removal of the largest observation in terms of population, China, that has a positive but very small carbon price of less than 1\$. After repeating the analysis for the complete data set with this single observation removed, we obtain vastly different results: estimates are no longer statistically significantly different from zero. The same holds true when removing both outliers in terms of population share, China and India (see Appendix \ref{logistic_regression}). The positive relationship between the existence of a carbon price and the population share is therefore not robust and highly influenced by these outliers.
}

\section{Conclusion} \label{sec:conclusion}

We conclude from the lack of empirical evidence of a positive relationship between the population sizes of countries and their established carbon prices, that the free-rider hypothesis cannot be supported based on actual carbon pricing data. If free-riding were among the key obstacles for effective climate policies around the globe, as has been postulated for decades, then according to the canonical way of modeling non-cooperative policy-making, we would expect to see a pronounced country-size effect in the data. This is because a country with a larger share of the global population should internalize more of the climate externalities that accrue inside of its borders than a smaller country. Any efficient climate policy mix that internalizes such externalities comprises a carbon pricing scheme, according to basic economic principles. Therefore, larger countries would be expected to establish higher carbon prices than smaller countries, if free-riding indeed plays a major role. Controlling for various other country characteristics, a positive country-size effect on carbon prices is, however, not identified. 
This points at other obstacles that can prevent countries from implementing effective climate policies, and that may be more important in practice than the free-rider incentive. Such obstacles could be competitiveness concerns under unilateral carbon pricing, (other) political economy considerations, and a lack of awareness and public support for effective climate policy measures by the wider public, to name just a few.

From a policy-perspective, this would suggest that measures targeted primarily at the free-rider problem, such as trade sanctions designed to foster participation in a climate coalition or climate club (Nordhaus 2015) might be less suitable or less effective to address the climate problem than other measures. For example, if competitiveness concerns are a key obstacle for the implementation of effective climate policy measures, then border carbon adjustment may be more effective in tackling the problem than trade sanctions. If instead a lack of public support for climate policy is a key concern, then informational campaigns and investments in education may be more effective. A combination of such approaches is likely to be most effective to curb emissions globally in the next few years and decades in an effort to stabilize the climate system. 
Future research should help to identify what obstacles are most severe, if it is not free-riding (as our results suggest), so that an adequate policy mix can be designed to address the climate problem effectively. %The ``efficient policy mix'' of a country will surely depend on the country's characteristics. However, countries must also learn to cooperate more effectively to tackle the global climate problem than they have done in the past.% The Paris Agreement is merely a first step on the way towards more effective climate cooperation.

%\newpage

\newpage

\appendix

\section{Appendix}
\label{sec:Appendix}

\subsection{Appendix for Section~\ref{sec:coop} (climate cooperation)} \label{sec:App:clim:coop}

Consider stage 2 of the game. Hence, a coalition $S$ has already formed. Let $A_n$ be the aggregated abatement of all outsiders, whereas $A_s$ is the total abatement of all coalition members. The total population size of all countries in $S$ is denoted by $P_s$. The social welfare of the coalition is thus given by
\[
W_s = \beta \cdot P_s \cdot (A_n+A_s) - \frac{\gamma}{2} \sum_{i \in S} P_i \cdot a_i^2.
\]
The coalition $S$ can be thought of as a social planner who sets the abatement targets of its member countries $i \in S$, so as to maximize $W_s$. Hence, by joining the coalition, country $i$ gives up its sovereignty regarding climate policy, and delegates this decision to the planner. It is easy to see, that this planner will set identical abatement targets per capita for each country $i$ in the coalition, that is $a_i \equiv a_s$ for all $i \in S$.\footnote{Suppose to the contrary that the per capita abatement targets of two countries inside the coalition were different. Then by keeping the total abatement of the coalition, $A_s$, constant, $W_s$ could be increased by shifting some of the abatement from a country with a higher abatement per capita target to one with a lower target. This follows immediately from the convexity of the abatement cost function $c(a_i)$.} Therefore, the planner's optimization problem reads:
\begin{equation} \label{us}
\max_{a_s} P_s \cdot \left[ \beta \cdot (A_n+P_s \cdot a_s) - \frac{\gamma}{2 \cdot }a_s^2 \right] = P_s \cdot u_s.
\end{equation}
Each person inside the coalition obtains an identical utility of $u_s$.\footnote{It, thus, seems natural in this setting to assume that no transfers take place inside the coalition, as these would necessarily lead to heterogeneous utility levels inside the coalition (given that the planner maximizes total welfare of all coalition members, as described above).}

Solving the above maximization problem, we obtain the first-order condition:
\[
\beta \cdot P_s = \gamma \cdot a_s,
\]
from which we obtain the uniform per capita abatement target of all countries inside the coalition, as well as the total abatement of the coalition:
\begin{equation} \label{as}
a_s = P_s \cdot \frac{\beta}{\gamma} \mbox{ , } A_s = P_s^2 \cdot \frac{\beta}{\gamma} .
\end{equation}
Using this result in \eqref{us}, the resulting utility of each person inside the coalition becomes (after simplifying):
\begin{equation} \label{us:solution}
u_s = \beta \cdot A_n + P_s^2 \cdot \frac{\beta^2}{2 \gamma}.
\end{equation}
The social welfare maximization problem of each outsider country yields a dominant strategy. Using \eqref{eq:non:coop}, we obtain that as an outsider, country $i$ sets an abatement target per capita, resp.\ a total abatement target of:
\[
a_{i,n} = P_i \cdot \frac{\beta}{\gamma} \mbox{ , } A_{i,n} = P_i^2 \cdot \frac{\beta}{\gamma}.
\]
Note, that this corresponds to the per capita abatement target of a coalition member in \eqref{as}, with the only difference that coalition members take the total population size of the coalition in consideration, whereas an outsider only maximizes the welfare of its own population (with size $P_i$).

To complete the equilibrium characterization of this climate cooperation game, it remains to determine, which countries will join the coalition in equilibrium. We restrict our attention to equilibria in pure strategies.\footnote{A climate cooperation game where countries randomize over their participation decisions is analyzed in Hong and Karp (2012).} To this end, we focus on the participation decision of a single country $i$, and treat the participation decisions of all other countries as given. This corresponds to a Nash equilibrium in the participation stage of the game, with countries anticipating the results from the subsequent abatement stage. As a working hypothesis, suppose that there exists a Nash equilibrium where country $i$ joins the coalition on the equilibrium path, i.e., $i \in S$. All we need to do is to compare country $i$'s equilibrium payoff with the payoff that it obtains if it deviates at the participation stage, thereby becoming an outsider. Then the formation of the coalition $S$ is a Nash equilibrium, if no country $i \in S$ has an incentive to deviate in this way, and also no country $j \notin S$ has an incentive to deviate by joining the coalition.

Relevant for country $i$'s participation decision is the utility of a representative person living inside this country. If the country does not deviate and joins the coalition, it is given by $u_s$, see \eqref{us:solution}. If country $i$ deviates and stays outside, then its total abatement equals $A_{i,n}=P_i^2 \cdot \beta / \gamma$. The aggregated abatement of all {\em other} outsiders, that we continue to denote by $A_n$, remains unchanged (i.e., as on the assumed equilibrium path) due to their dominant abatement strategies. The total abatement of all remaining coalition members reduces to (using \eqref{as}):
\begin{equation} \label{As-i}
A_{s,-i} = P_{s,-i}^2 \cdot \frac{\beta}{\gamma},
\end{equation}
where $P_{s,-i} \equiv P_s - P_i$ is the total population of all remaining coalition members. Under the deviation strategy, country $i$ thus obtains a social welfare per capita of
\[
u_{i,n} = \beta \cdot (A_n + A_{i,n} + A_{s,-i}) - \frac{\gamma}{2} \cdot a_{i,n}^2 = \beta \cdot (A_n + A_{s,-i}) + P_i^2 \cdot \frac{\beta^2}{2\gamma}.
\]
The deviation is unprofitable for country $i$ if $u_s \geq u_{i,n}$. This yields the following equilibrium condition:
\[
\beta A_n + P_s^2 \cdot \frac{\beta^2}{2 \gamma} \geq \beta \cdot (A_n + A_{s,-i}) + P_i^2 \cdot \frac{\beta^2}{2\gamma}.
\]
Note, that $\beta \cdot A_n$ drops out. This is because the abatement choices of all other outsiders remain unaffected by country $i$'s participation decision. In this setup, country $i$'s only incentive to join the coalition is to induce the {\em other coalition members} to increase their abatement efforts, since by joining, the coalition obtains a higher total population size. Using \eqref{As-i}, the condition then becomes:
\[
P_s^2 \cdot \frac{\beta^2}{2 \gamma} \geq P_{s,-i}^2 \cdot \frac{\beta^2}{\gamma} + P_i^2 \cdot \frac{\beta^2}{2\gamma}.
\]
Dividing both sides by $(\beta^2/2\gamma)$, and replacing $P_s$ by $P_{s,-i}+P_i$, the equilibrium condition becomes:
\[
(P_{s,-i}+P_i)^2  \geq 2P_{s,-i}^2  + P_i^2.
\]
This simplifies to the simple but crucial condition:
\begin{equation} \label{eq:cond}
2P_i \geq P_{s,-i}.
\end{equation}
In words: Country $i$ has an incentive to join the coalition, if and only if the aggregated population size of all {\em other} coalition members is not larger than twice the population size of country $i$.

To shed further light on this result, let us briefly revisit the benchmark case where all $N$ countries are ex ante symmetric (see Barrett 1994). It is well-known that under symmetry, in the linear-quadratic case, there are two ``stable'' coalition sizes: $k^*=2$ and $k^*=3$ (see also Barrett 1995). However, under symmetry, there is a multiplicity of equlibria also with respect to the identities of the countries that join the coalition. Of course, if countries coordinate, say, on an equilibrium with $k^*=3$ coalition members, the identities of these three countries are not uniquely determined.

Going back to condition \eqref{eq:cond}, with symmetric countries, the condition is satisfied with equality if there are two other countries in the coalition (apart from country $i$). Hence, a third country is indifferent between joining and staying out, thus replicating the standard result from the literature that two or three countries may join. Formally, under symmetry, $k^*=3$ is the maximum coalition size that satisfies ``internal stability'' (i.e., no insider country wants to deviate and stay outside of the coalition). And $k^*=2$ is the smallest coalition size that still satisfies ``external stability'' (i.e., no outsider country wants to deviate and join the coalition).

However, under asymmetric population sizes, such as considered here, condition \eqref{eq:cond} yields much sharper results, both with respect to the equilibrium coalition size, as well as with respect to the identities of the countries that join the coalition. These are summarized in Proposition~\ref{prop:coalition} in the main text.  %To simplify the argument, suppose countries are ordered, country 1 being the one with the largest, and country $N$ the one with the smallest population size. As is the case also in the real world, we focus on the case where no two countries have exactly the same population size.

%
%\medskip
%
%We now prove the results summarized in Proposition~\ref{prop:coalition} in the main text. Subsequently, we will discuss the degree of multiplicity of equilibria, as implied by Proposition~\ref{prop:coalition}.

\medskip

We now argue why (as claimed in the main text after Proposition~\ref{prop:coalition}) the degree of multiplicity of equilibria is typically small. To this end, consider the case where $P_1-P_2 = P_2-P_3 \equiv \Delta$. Hence, the difference in population sizes between the biggest and the second biggest country ($\Delta$), is the same as between the second and the third biggest country. This is a useful benchmark case. Then for country 3, the condition $P_3 \geq 2P_2-P_1$ holds with equality, as is easy to verify. Hence, there are two equilibria: apart from country 1, either country 2 or country 3 may join the coalition. But this is a knife-edge case. If $P_1-P_2 < P_2-P_3$, the unique equilibrium outcome is that countries 1 and 2 form a coalition. If $P_1-P_2 > P_2-P_3$, instead of country 2, country 3 may join the coalition in equilibrium. Intuitively, the multiplicity of equilibria is larger only in the special case where country 1 is bigger than country 2, but there are several countries that are roughly of equal size as country 2. Then it is not uniquely determined, which of these countries joins the coalition.

In spite of this indeterminacy (if it exists), the most natural equilibrium candidate is that the largest two countries form a coalition if $2P_2 > P_1$ holds. This is a natural focal point, as country 2 is more ``inclined'' to join than any smaller country. In this setup, larger countries benefit relatively more from joining than smaller ones, because once inside a coalition, all individuals in the participating countries achieve identical utility. Yet, in a country with a larger population size, there are more individuals who enjoy the benefits from climate cooperation than in a smaller country. In other words, a smaller country is more eager to become a free-rider: if it were to join the coalition, the abatement target (per capita) of this country would increase sharply, whereas the other coalition members raise their efforts only a little bit, due to the country's small population size. Hence, a small country has little to gain but a lot to lose from joining the coalition, whereas a larger country can gain more.

\subsection{Proofs} \label{sec:proofs}

\textit{Propositions 1 and 2} follow directly from the discussions in the main text.

\begin{proof}[Proof of Proposition~\ref{prop:mixed:motives}]
By the convexity of the abatement cost function $c(\cdot)$, the decision-maker in country 1 does not fully compensate for the reduced incentives to abate of the decision-makers in the other countries (see \eqref{cond:mixed:motives}). Hence, the aggregated abatement in the case with mixed motives is strictly smaller than in the global social welfare optimum. The result then follows immediately from the concavity of the function $b(\cdot)$ (using the first condition in \eqref{prices:mixed:motives}).
\end{proof}

\begin{proof}[Proof of Proposition~\ref{prop:coalition}]
All results follow from applying \eqref{eq:cond}. We first show that under asymmetric countries, at most two countries can be coalition members in any pure strategy Nash equilibrium. To see this, start from the benchmark case with symmetric countries and suppose three countries form a coalition. Then \eqref{eq:cond} holds with equality for each of them. Now introduce an asymmetry by making one of the three countries smaller. Then clearly, \eqref{eq:cond} now holds with strict inequality for each of the bigger countries, whereas it is violated for the smaller country. By introducing an asymmetry also among the two bigger countries, \eqref{eq:cond} clearly continues to hold for the biggest country (country 1), and may continue to hold or fail to hold for the second largest country. This proves that at most two countries can be coalition members in any pure strategy Nash equilibrium.

Now consider the case where $2P_2 < P_1$. If country 1 joins the coalition, then \eqref{eq:cond} is violated for each of the countries $2,3,...,N$, so that country 1 and any of the other countries forming a coalition is not an equilibrium. It is also not an equilibrium that a coalition of two countries forms that does not include country 1. Since both of these countries are smaller than country 1, \eqref{eq:cond} would always be satisfied for country 1 so that this country deviates and joins the coalition, but a coalition of three asymmetric countries is never stable, as shown above. Hence, the only pure strategy Nash equilibrium is the formation of a singleton coalition by country 1.

Next consider the knife-edge case where $2P_2 = P_1$. It is easy to see that either a singleton coalition of country 1 forms, or a coalition of countries 1 and 2. This is because \eqref{eq:cond} holds with equality for country 2 if country 1 joins, so that country 2 is indifferent between joining and staying out. No other coalition is stable for the reasons indicated above.

Finally consider the case where $2P_2 > P_1$. For the reasons stated above, only a coalition of two countries can be stable. However, there may be some multiplicity of equilibria, as the identity of the second country that joins (apart from country 1) is not always uniquely determined. For this country, the following two conditions must hold: (i) condition \eqref{eq:cond}, i.e., the country must be at least half as big as country 1, and (ii) if this country is {\em not} country 2, then it must not be profitable for country 2 to deviate and join the coalition, i.e., \eqref{eq:cond} must {\em not} hold with a strict inequality for country 2. Let $l$ be the index of the {\em smallest} country for which both of these conditions are satisfied. Hence, $2P_l \geq P_1$, and $2P_2 \leq P_1+P_l$. Combining these two conditions, we find that country $l$ is the smallest country satisfying $P_l \geq \max \{ P_1/ 2,2P_2-P_1\}$, as stated in the proposition.
Then it follows that apart from country 1, another country from the set $\{ \mbox{country 2, country 3, ... , country } l \}$ joins the coalition. If this set comprises only country 2, then there is a unique equilibrium where countries 1 and 2 join the coalition.
\end{proof}

\subsection{Details on the empirical analysis (Section~\ref{sec:empirical})}
\subsubsection{Data Collection} \label{data collection}
The data used for the empirical analysis was collected from different sources and consequently joined. As a measure of country size, we used the percentage of the total global population based on the de facto definition of population, which counts all residents regardless of legal status or citizenship (World Bank 2022).
Data for carbon prices are the existing total prices including any potential rebate and are expressed in 2019USD/tCO2e from Dolphin (2022). Data on GDP per capita is gross domestic product divided by midyear population measured in current USD (World Bank 2022). The CO$_2$ emissions per capita, measured in metric tons per capita, cover all Carbon dioxide emissions stemming from the burning of fossil fuels and the manufacture of cement (World  Bank 2022). The Worldwide  Governance Indicator data set contains six indicators that combine the views of a large number of enterprise, citizen and expert survey respondents in industrial and developing countries towards governance (World Bank 2022). They represent six dimensions of governance: Voice and Accountability, Political Stability and Absence of Violence/Terrorism, Government Effectiveness, Regulatory Quality, Rule of Law and Control of Corruption. For our regression analysis, we used one combined variable, the mean of the percentile rank scores of each indicator. The share from fossil fuel in energy consumption comprises coal, oil, petroleum, and natural gas products, measured in percent of total (World Bank 2022). The data on industry share of GDP measures the percentage of industry (including construction) of the total GDP (World Bank 2022). Data on exposure to air pollution measures the annual mean levels of fine particulate matter in cities (World Health Organization 2022). Climate Change awareness, describes the percentage of a countries population who are aware of climate change (Gallup World Poll 2007/2008). The last variable is expected climate damages per capita, calculated using estimates for expected damages from the reference case using the constant 3\% discount rate out of “Country-level social cost of carbon" (Rike et al. 2018).
Table \ref{summary_stats} contains summary statistics for the collected variables.

\begin{table}[h] \centering
  \caption{Summary statistics of collected variables}
  \label{summary_stats}
\resizebox{\textwidth}{!}{%
\begin{tabular}{ p{5cm} lccccccc}
\\[-1.8ex]\hline
\hline \\[-1.8ex]
 & \multicolumn{1}{c}{Min} & \multicolumn{1}{c}{Pctl(25)} & \multicolumn{1}{c}{Median} & \multicolumn{1}{c}{Mean} & \multicolumn{1}{c}{Pctl(75)} & \multicolumn{1}{c}{Max} & \multicolumn{1}{c}{NA's} \\
\hline \\[-1.8ex]
ECP & 0.00 & 0.00 & 0.00 & 3.12 & 0.00 & 67.35 & 0 \\
Population share & 0.0001 & 0.03 & 0.11 & 0.52 & 0.38 & 18.59 & 1 \\
CO$_2$ emissions per cap. & 0.03 & 0.78 & 2.63 & 4.24 & 5.99 & 32.42 & 7 \\
Share of fossil fuels in energy consumption & 0.00 & 22.59 & 68.77 & 56.05 & 86.62 & 100.00 & 27 \\
GDP per capita & 239.0 & 1,916.5 & 5,246.1 & 15,293.5 & 16,851.4 & 175,813.9 & 0 \\
Mean Governance Ind. & 1.52 & 25.63 & 46.79 & 48.34 & 69.22 & 97.76 & 0 \\
Industry share of GDP & 5.20 & 18.45 & 23.60 & 25.27 & 30.80 & 59.10 & 8 \\
Air pollution & 5.73 & 13.28 & 22.16 & 27.42 & 37.32 & 93.18 & 8 \\
Climate Change awareness & 20.62 & 46.05 & 61.88 & 63.02 & 84.21 & 98.92 & 76 \\
Expected climate damages per capita & -27.56e-07 & 0.05e-07 & 0.87e-07 & -0.02e-07 & 1.29e-07 & 7.44e-07 & 29 \\
\hline \\[-1.8ex]
\end{tabular}
}
\end{table}

\subsubsection{Missing Values}
\label{missing_values}
From the the 195 covered observations, 90 have at least one missing value. There were 76 cases where data on climate change awareness was missing, 27 cases missing for share of fossil fuels in energy consumption and 8 cases missing for Air pollution and Industry share of GDP. In addition, there were missing values for individual World Governance Indicator percentile rank values. We proceeded to use only the mean values of these rank scores, always based on the existing data, leaving no missing values for the mean Governance Indicator values.\\
All of these missing values are a result of the data collection process, as some databases feature a smaller set of countries than others. We assume that the non-inclusion of certain countries follows non-random considerations, as smaller and less affluent countries are less likely to be included. It is therefore expected, that there exists a connection between the absence of values and these values themselves. Because of that, the missing observations need to be classified as “Missing Not at Random (MNAR)” (Kang 2013).

Considering our small sample size and that Missing at Random is not properly satisfied, row wise deletion as a strategy to deal with missing observations is likely to introduce statistical bias (Kang 2013). Therefore, we make use of multiple imputation (replacement values) to estimate the missing values. Multiple imputation is a strategy that creates for each missing data point, several imputed values, which are each predicted by a slightly different regression model from the other variables in the data set to reflect sampling variability. We implemented the multiple imputation procedure using the R package “mice" (Azur et al. 2011). The package creates multiple imputations for our multivariate missing data. The method is based on Fully Conditional Specification, where each incomplete variable is imputed by a separate model iteratively. We estimated five imputed data sets, which is common practice and the default option of the “mice" imputation package. Afterwards we proceeded with individual regression analyses on each imputed data set and then pooled and reported the pooled results in the main text.

\subsubsection{Cross-Correlation Matrix}
\label{cross-correlation}
This cross-correlation matrix is calculated based upon the data from all countries with existing carbon prices. Because of the high correlation coefficient between the mean of Governance Indicators and climate change awareness variables, the latter is removed as an independent variable for the investigated regression models. It was chosen of the two, because of the large amount of missing values compared to the mean of Governance Indicator values.
\begin{figure}[H]
    \centering
    \includegraphics[width=\textwidth]{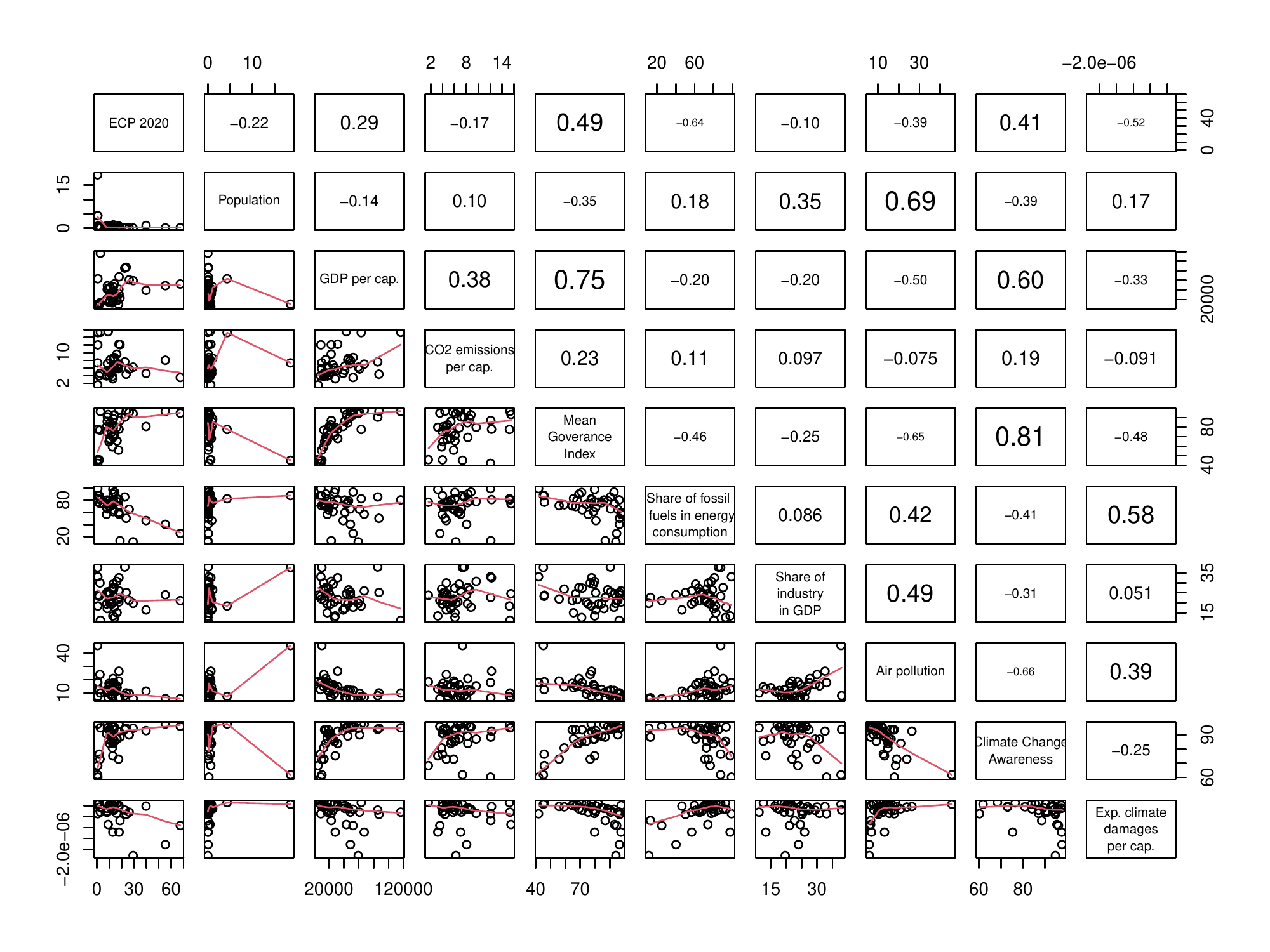}
    \caption{Cross-Correlation Matrix. Pearson correlation coefficient is used.}
    \label{fig:corr}
\end{figure}

\mycomment{
\subsubsection{Logistic Regressions}
\label{logistic_regression}
The results for the logistic regression based on the imputed data set with the observation of China removed are covered in Figure \ref{fig:spec_log2} and with China and India removed are covered in Figure \ref{fig:spec_log3}. In both cases the coefficient estimation is not significantly different from zero across all considered model specifications.

\begin{figure}[h]
    \centering
    \includegraphics[height=11cm]{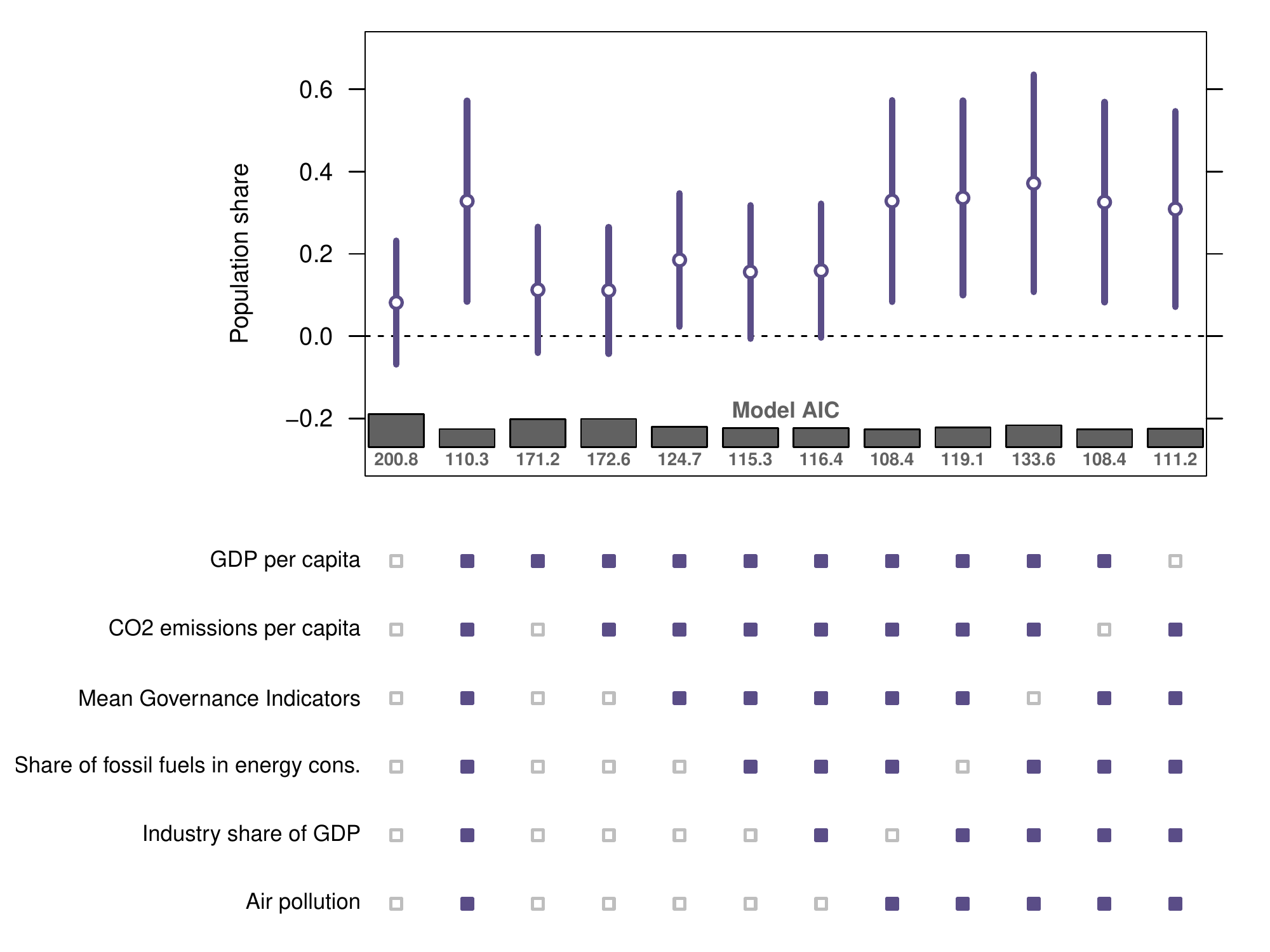}
    \caption{Specification chart. Coefficient estimates with confidence intervals for population size for different specifications of logistic regression models on imputed data sets. Column 1 shows the effect of population share for the univariate logistic regression model containing only population share as an independent variable. Column 2 shows the effect of
    population share on carbon prices for the “full" model, containing all selected explanatory
    variables. Columns 3 to 7 build up upon the univariate model by adding individual variables
    and columns 8 to 12 show the results for the “full but leave-one-out” model.}
    \label{fig:spec_log}
\end{figure}

\begin{figure}[h]
    \centering
    \includegraphics[height=11cm]{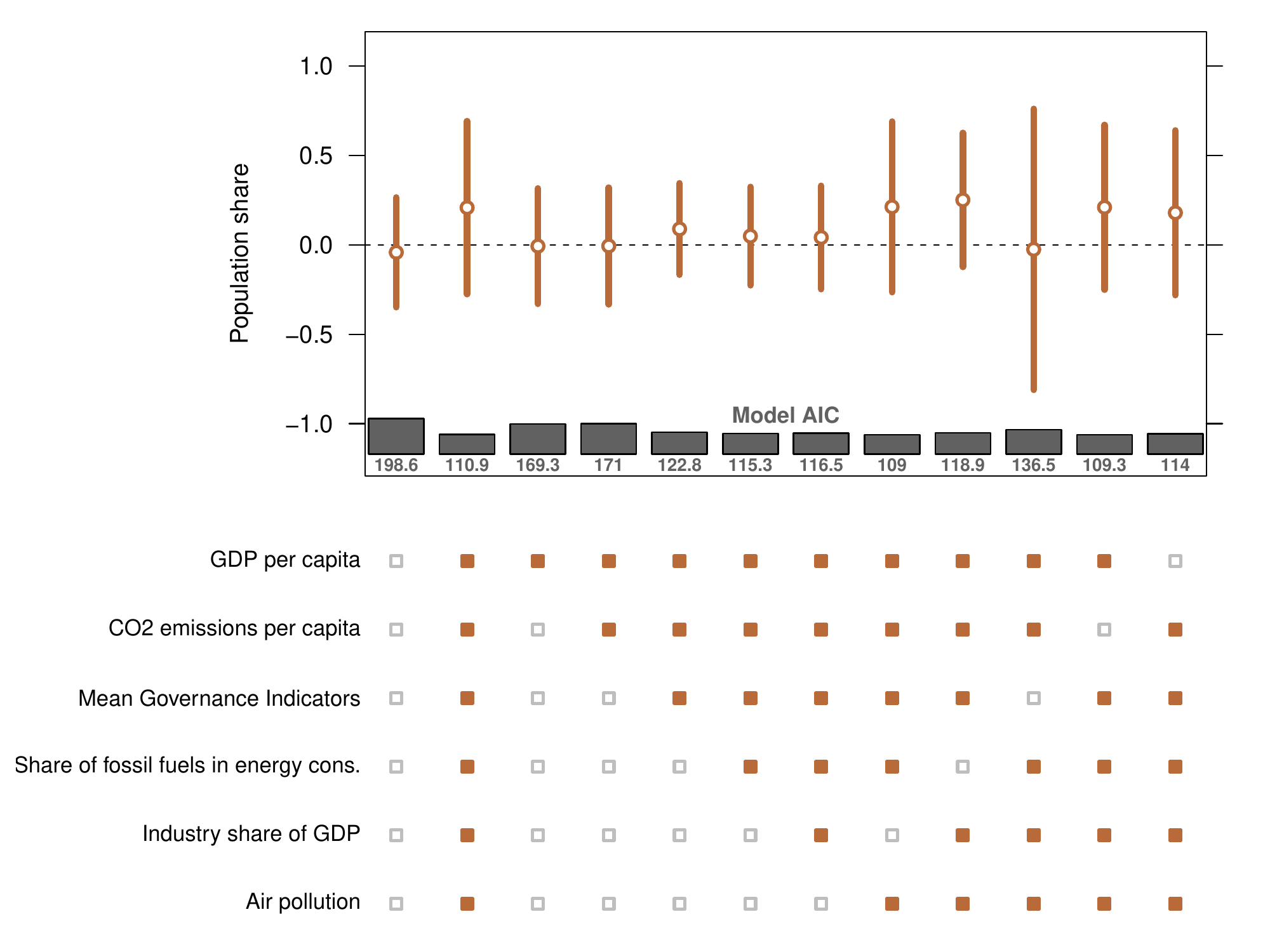}
    \caption{Specification chart for the logistic regressions without China.}
    \label{fig:spec_log2}
\end{figure}

\begin{figure}[h]
    \centering
    \includegraphics[height=11cm]{spec_chart_logreg_imp_wo_china&india.pdf}
    \caption{Specification chart for the logistic regressions without China and India.}
    \label{fig:spec_log3}
\end{figure}
}

\mycomment{
\subsection{Comparing countries with and without a carbon price [removed from main text - temporarily parked here]}

After investigating the subsample of countries that have already implemented carbon prices, we proceed to analyze the difference between countries with established carbon prices and countries without. To do so, we consider a logistic regression model, introducing a binary dependent variable with value 1, if a carbon price exists in a given country, and 0 if no carbon price has been implemented yet. With this analysis we try to identify if any population size effects can be found between these two groups of countries.

\begin{figure}[h]
    \centering
    \includegraphics[width=\textwidth]{spec_chart_logreg_imp.pdf}
    \caption{Specification chart. Coefficient estimates with confidence intervals for population size for different specifications of logistic regression models on imputed data sets. Column 1 shows the effect of population share for the “simple" logistic regression model containing only population share as an independent variable. Column 2 shows the effect of
    population share on carbon prices for the “full" model, containing all selected explanatory
    variables. Columns 3 to 7 build up upon the “simple" model by adding individual variables
    and columns 8 to 12 show the results for the “full" model, but leaving one variable left out.}
    \label{fig:spec_log}
\end{figure}

Hence, we report the results of multiple logistic regressions in Figure \ref{fig:spec_log} with the same setup as above. Next to the coefficient estimates for population size and their corresponding confidence intervals, we additionally report each model's Akaike information criterion (AIC) values, an estimate of the model's prediction error, to evaluate the quality of the different models.

Column 1 in Figure \ref{fig:spec_log} considers only the share of global population as an explanatory variable. In this scenario, we obtain a positive but insignificant coefficient, while obtaining a relatively high AIC. Column 2 describes the ``full'' model, that is, a logistic regression model including all other country characteristics as seen before. In this setup, we observe that the variable population size is significantly positively associated with the existence of carbon prices. Every percentage point increase in a country's share of the global population\footnote{1\% share of global population corresponds to an absolute change in population size of approx. 75 million} is associated with increased odds\footnote{ratio of the probability that a carbon price is established to the probability of a price being not established} of establishing a carbon price by 37.6\% (95\% CI: 2.3\%-85.2\%). Furthermore, the estimated coefficient is consistently positive for all estimated logistic regression models and significant for most of them. Next to the "full" model, positive significant results are obtained for the models in column 5, 8, 9, 10 and 12.
This would suggest a positive linear association between the existence of a carbon price in a given country and its population size.

However, these results are not robust under removal of the largest observation in terms of population, China, that has a positive (but small) carbon price. After repeating the analysis for the complete data set with this single observation removed, we obtain vastly different results: estimates are no longer statistically significantly different from zero. The same holds true when removing both outliers in terms of population share, China and India (see Appendix \ref{logistic_regression}). The above found positive relationship between the existence of a carbon price and the population share is therefore not robust and highly influenced by these outliers.

A similar result is obtained when we choose a cutoff value of 1 USD for the existence of a carbon price, essentially changing the assignment of four countries with an emission-weighted carbon price below 1 USD (China, Colombia, Kazakhstan, United States). In that case, we also obtain insignificant coefficient estimations closer to zero.\RS{Before, we argued that China and India are outliers. Would it make sense (as yet another robustness check) to drop India and China (but without the 1 USD cutoff applied)? Just to see if things remain qualitatively unchanged. Otherwise, it might seem that we picked out China arbitrarily, but left India in on purpose.}

To summarize, we were not able to provide evidence for the existence of a (positive) country size effect in the analyzed real world carbon pricing data for countries with established carbon prices. Furthermore, we obtained no robust results for the existence of a (positive) country size effect on countries' decision to establish a carbon price.
}

\end{document}